\theoremstyle{definition}
\newtheorem{theorem}{Theorem}
\newcommand{\nth}[1]{{#1}^{\text{th}}}
\newcommand{\abs}[1]{\left|{#1}\right|}
\newcommand{\red}[1]{{\color{red}{#1}}} 
\newcommand{\NBS}[0]{N_{\mathrm{\scalebox{0.5} {BS} }}}
\newcommand{\RD}[0]{r_{\mathrm{\scalebox{0.5} {RD} }}}
\newcommand{\rf}[0]{r_{\mathrm{\scalebox{0.5} {F}}}}
\newcommand{\BD}[0]{r_{\mathrm{\scalebox{0.5} {BD} }}}
\newcommand{\EBRD}[0]{r_{\mathrm{\scalebox{0.5} {EBRD} }}}
\newcommand{\GR}[0]{\mathcal{G}_\mathrm{\scalebox{0.5} {uca}}}
\newcommand{\RDUCA}[0]{{r}_\mathrm{ \scalebox{0.5} {RD}}^{\scalebox{0.5} {uca}}}
\newcommand{\DUCA}[0]{D_{\scalebox{0.5} {uca}}}
\newcommand{\BDUCA}[0]{{r}_\mathrm{\scalebox{0.5} {BD}}^{\scalebox{0.5} {uca}}}
\newcommand{\RDULA}[0]{{r}_\mathrm{\scalebox{0.5} {RD}}^{\scalebox{0.5} {ula}}}
\newcommand{\DULA}[0]{D_{\scalebox{0.5} {ula}}}
\newcommand{\BDULA}[0]{{r}_\mathrm{ \scalebox{0.5} {BD}}^{\scalebox{0.5} {ula}}}
\newcommand{\GU}[0]{\mathcal{G}_\mathrm{\scalebox{0.5} {ula}}}
\titlespacing*{\section}{0pt}{.75pt}{.75pt}  
\titlespacing*{\subsection}{0pt}{.5pt}{.5pt}
\titlespacing*{\subsubsection}{0pt}{.15pt}{.15pt}
\begin{document}

\title{Uniform Circular Arrays in Near-Field: Omnidirectional Coverage with Limited Capacity \\

\author{Ahmed Hussain,~\IEEEmembership{Graduate Student Member, IEEE}, Asmaa Abdallah~\IEEEmembership{Senior Member, IEEE},\\ Abdulkadir Celik,~\IEEEmembership{Senior Member, IEEE}, and Ahmed M. Eltawil,~\IEEEmembership{Senior Member, IEEE}

\thanks{Ahmed Hussain, Asmaa Abdallah, and Ahmed M. Eltawil are with KAUST, KSA. Abdulkadir Celik is with University of Southampton, SO17 1BJ UK. (e-mail: ahmed.hussain.2@kaust.edu.sa; 
asmaa.abdallah@kaust.edu.sa; 
a.celik@soton.ac.uk; ahmed.eltawil@kaust.edu.sa).}

}
}

\maketitle

\begin{abstract}
Recent studies suggest that uniform circular arrays (UCAs) can extend the angular coverage of the radiative near-field region. This work investigates whether such enhanced angular coverage translates into improved spatial multiplexing performance when compared to uniform linear arrays (ULAs). To more accurately delineate the effective near-field region, we introduce the effective beamfocusing Rayleigh distance (EBRD)—an angle-dependent metric that bounds the spatial region where beamfocusing remains effective. Closed-form expressions for both beamdepth and EBRD are derived for UCAs. Our analysis shows that, under a fixed antenna element count, ULAs achieve narrower beamdepth and a longer EBRD than UCAs. Conversely, under a fixed aperture length, UCAs provide slightly narrower beamdepth and a marginally longer EBRD. Simulation results further confirm that ULAs achieve a higher sum-rate under the fixed element constraint, while UCAs offer marginal performance gain under the fixed aperture constraint.
\end{abstract}

\begin{IEEEkeywords}
ULA, UCA, beamdepth, effective beamfocusing Rayleigh distance, UM-MIMO. 
\end{IEEEkeywords}
\section{Introduction}
\IEEEPARstart{F}{uture} wireless networks may operate in the radiative near-field due to the deployment of \ac{UM}-\ac{MIMO} antenna arrays. Unlike the far-field, this region is characterized by spherical wavefronts rather than planar wavefronts. While far-field \acp{UE} can be served based solely on angular separation, the spherical wave propagation in the near-field enables the formation of finite-depth beams in the distance domain. Although the angular beamwidth in radians remains the same in both the near- and far-field regions, the physical beamwidth, which depends on the distance from the antenna array, becomes narrower in the near-field compared to the far-field. These properties enable finer \ac{UE} separation in both the angle and distance dimensions, thereby enhancing spatial multiplexing gains. Recent studies show that both the beamdepth and the limits of beamfocusing depend on the direction in which the beam is focused and the geometry of the antenna array \cite{10934779}. For a \ac{ULA}, the beamdepth is smallest at boresight and increases toward endfire directions \cite{10988573}, suggesting that finite-depth beams—and hence near-field benefits—are confined primarily to \acp{UE} located near boresight. To overcome this restricted angular coverage, \acp{UCA} owing to their rotational symmetry, have been proposed to extend the near-field region \cite{10243590}. This motivates a key question: \textit {Does the omnidirectional coverage of the \ac{UCA} translate into improved spatial multiplexing performance in the near-field?}


In classical electromagnetic theory, the boundary between the near-field and far-field regions is defined by the Rayleigh distance, which is derived based on the phase error between planar and spherical wavefronts. However, the Rayleigh distance overestimates the effective near-field region. To address this limitation, another boundary, referred to as the \ac{ERD}, was introduced for \acp{ULA} in~\cite{10541333} and later extended to \acp{UCA} in~\cite{10243590}. The \ac{ERD} defines the near-field boundary as the distance at which the beamforming loss incurred by adopting the far-field channel model exceeds a predefined threshold. However, the \ac{ERD} also overestimates the effective near-field region in terms of beamfocusing \cite{11428208}. To address this issue, another near-field boundary, termed the \ac{EBRD}, was introduced for \ac{ULA} in~\cite{10988573}, which defines the extent of the near-field region where beamfocusing remains effective. It is worth noting that the effective beamfocused Fraunhofer distance proposed in~\cite{10934779} is equivalent to the \ac{EBRD}.

The reported enhancement in angular coverage for \acp{UCA} compared to \acp{ULA}, based on the \ac{ERD} in~\cite{10243590}, assumes a fixed aperture length and does not account for the elevation coverage inherently provided by \acp{UCA}. Moreover, it is essential to evaluate the achievable communication capacity under practical system constraints, such as a fixed number of antenna elements or a constrained aperture length. To the best of the authors’ knowledge, a rigorous comparison of the near-field multi-user capacity between \ac{UCA} and \ac{ULA}, remains unexplored in the literature. To address these gaps, this work derives closed-form expressions for beamdepth and introduces the \ac{EBRD} for the \ac{UCA}.  A comparative analysis of beamdepth and \ac{EBRD} is carried out for both \ac{ULA} and \ac{UCA} under two practical design constraints: fixed antenna element count and fixed aperture length. Furthermore, simulation results are presented to compare the multi-user sum-rate of \ac{UCA} with \ac{ULA} and \ac{URA} under the same constraints. 
\section{System Model} \label{Sec_II}
We consider the downlink of a single-cell multi-user \ac{UM}-\ac{MIMO} system, where a \ac{BS} with $\NBS$ antennas simultaneously communicates with $K$ single-antenna \acp{UE} within its coverage area. The received signal at the $\nth{k}$ \ac{UE} is
\begin{equation} 
y_{k}=\sqrt{\gamma} \mathbf{w}_{k} \mathbf{h}_{k} s_{k}+\sqrt{\gamma} \sum_{j=1, j \neq k}^{K} \mathbf{w}_{j} \mathbf{h}_{k} s_{j} + {z}_{k}, 
\label{eqn_IIA_1}
\end{equation}
where $\gamma$ is the average \ac{SNR}, $s_{k}$ and $s_{j}$ are the transmit symbols with unit-norm, $\mathbf{w}_{k} \in \mathbb{C}^{1 \times \NBS} $ and $\mathbf{w}_{j} \in \mathbb{C}^{1 \times \NBS}$ are unit norm precoding vectors for \ac{UE} $k$ and $j$, respectively, and $z_{k} \sim \mathcal{CN}(0,1)
$ denotes the zero-mean complex Gaussian additive noise. $\mathbf{h}_{k} \in \mathbb{C}^{\NBS \times 1}$ is the near-field channel vector between the \ac{BS} and the $\nth{k}$ \ac{UE} which is given by
\begin{equation} 
\mathbf{h}_{k}=\sqrt{\NBS}\beta_0\mathbf{b}\left(r_{0},\theta_{0}, \varphi_{0}\right) + \sum_{l=1}^{L}\sqrt{\frac{\NBS}{L}}\beta_l \mathbf{b}\left(r_{l},\theta_{l}, \varphi_{l}\right), 
\label{eqn_IIA_2}
\end{equation}
which contains one \ac{LoS} path ($l=0$) and $L$ N\ac{LoS} paths.  $\mathbf{b} \in \mathbb{C}^{\NBS \times 1}$ denotes the near-field array response vector that is focused at a specific distance $r_{l}$, elevation angle $\theta_{l}$, and azimuth angle $\varphi_{l}$. The channel gain for the LoS path is $\beta_0 = \sqrt{\frac{\kappa}{\kappa+1}}$ while the channel gains for N\ac{LoS} paths follow $\beta_l \sim \mathcal{CN}(0, \sigma_{\beta,l}^2)$, where $\sigma_{\beta,l}^2 = \frac{1}{\kappa+1}$. The Rician factor $\kappa$ represents the power ratio between the \ac{LoS} component and the  N\ac{LoS} components. 

The achievable rate for the $\nth{k}$ \ac{UE} can be expressed as 
\begin{equation}
\mathcal{R}_{k}=\log _{2}\left(1+\operatorname{SINR}_{k}\right),
\label{eqn_IIB_1}
\end{equation}
where,
\begin{equation}
\operatorname{SINR}_{k}=\frac{\gamma\left|\mathbf{w}_{k}\mathbf{h}_{k} \right|^{2}}{1+\gamma \sum_{j=1, j \neq k}^{K}\left| \mathbf{w}_{j}\mathbf{h}_{k} \right|^{2}}. 
\label{eqn_IIB_2}
\end{equation}
Based on \eqref{eqn_IIB_1}, the achievable sum-rate will be $R_{\mathrm{sum}}=\sum_{k=1}^{K} R_{k}$. We assume that perfect channel state information is available at the \ac{BS}. Assuming \ac{MRT}, the precoding vector $\mathbf{w}_{k}$ is given by $\mathbf{w}_{k}=\mathbf{h}_{k}^{\mathrm{H}} / \sqrt{\NBS}$. Accordingly, the achievable rate in \eqref{eqn_IIB_1} can then be written as 
\begin{equation}
\mathcal{R}_{k}=\log _{2}\left(1+\frac{\gamma \NBS}{1+\gamma \NBS \sum_{j=1, j \neq k}^{K} \mathcal{G}_{a, k j}^{2}}\right), 
\label{eqn_IIB_3}
\end{equation}
where $\mathcal{G}_{a, k j}$ denotes the value of the inner product of $\mathbf{h}_{k}$ and $\mathbf{h}_{j}$, which is defined as
\begin{equation}
\mathcal{G}_{a, k j}=\frac{\left|\mathbf{h}_{k}^{\mathrm{H}}(r_{k},\theta_{k}, \varphi_{k}) \mathbf{h}_{j}(r_{j},\theta_{j}, \varphi_{j})\right|}{\NBS}. 
\label{eqn_B4}
\end{equation}
The symbol $a \in [\mathrm{ula},\mathrm{uca}]$ denotes the array configuration, such as \ac{ULA} or \ac{UCA}. As evident from \eqref{eqn_IIB_3}, the term $\mathcal{G}_{a, kj}$ plays a critical role in determining the achievable sum-rate. With \ac{MRT} beamforming, when \acp{UE} are closely spaced, the inter-user interference is primarily governed by the overlap of the main lobes, which is determined by the beamwidth and beamdepth. In contrast, when \acp{UE} are sufficiently separated, the interference is mainly dictated by the sidelobe-induced channel correlations. In the subsequent sections, we analyze the behavior of $\mathcal{G}_{a, kj}$ in the distance domain for both \ac{ULA} and \ac{UCA} configurations.
\begin{figure}[t]
\centering
\includegraphics[width=1\columnwidth]{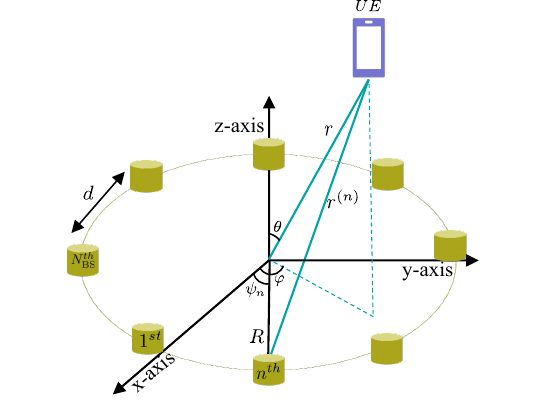}
\setlength{\belowcaptionskip}{-20pt}
\setlength{\abovecaptionskip}{0pt}
\caption{\ac{UCA} for near-field communication system. }
\label{fig1_system_model}
\end{figure}

We consider a \ac{UCA} as depicted in Fig. \ref{fig1_system_model}, where isotropic antenna elements are uniformly distributed along the circle of radius $R$. Unlike beamwidth, beamdepth is not an intrinsic electromagnetic property of an individual antenna element. Instead, it arises from the collective behavior of the antenna array. Therefore, beamdepth depends only on the array geometry and is not influenced by the radiation pattern of the individual antenna elements. The \ac{UE} is located at a distance $r$ from the center of the \ac{UCA} and subtends elevation angle $\theta$ and azimuth angle $\varphi$. In polar coordinates, the geometrical position of each antenna element is represented as $(R, \psi_n)$, where $\psi_n = \frac{2\pi n}{\NBS}$ for $n \in [1,2,\dotsc,N_\mathrm{BS}]$. The inter-element spacing in terms of arc length is $d = R \tfrac{2\pi}{\NBS}$. In this paper, the minimum distance from the \ac{BS} in the radiative near-field is set as $1.2D$ such that amplitude variations across the array are negligible, and accordingly, a \ac{USW} model is employed. Following the approach in~\cite{1137900}, it can be shown that this distance criterion applies to both \ac{ULA} and \ac{UCA} geometries. Correspondingly, the near-field array response vector is formulated as 
\begin{equation}\small
\mathbf{b}(r, \theta,\varphi) = \frac{1}{\sqrt{\NBS}} 
\begin{bmatrix}
e^{-j \frac{2\pi}{\lambda} (r^{(1)} - r)}, \dots, e^{-j \frac{2\pi}{\lambda} (r^{(\NBS)} - r)}
\end{bmatrix}^\mathrm{T},
\label{eqn_III_1}
\end{equation}
\normalsize
where $r^{(n)}$ denotes the propagation distance between the \ac{UE} and the $\nth{n}$ antenna of the \ac{UCA}. From Fig. \ref{fig1_system_model}, based on the law of cosines, $r^{(n)} = \sqrt{r^2 + R^2 - 2rR \sin \theta \cos(\varphi - \psi_n)}$, which can be further approximated using the second-order Taylor series expansion as $r^{(n)} \approx r - R \sin \theta \cos(\varphi - \psi_n) + \frac{R^2}{2r} \left( 1 - \sin^2\theta\cos^2(\varphi - \psi_n) \right)$.

\section{Analysis of UCA Near-field Beamforming} \label{Sec_lll}
In this section, we formulate array gain in the distance dimension, based on which we derive closed-form expressions for the beamdepth and \ac{EBRD} for \ac{UCA}. Finite-depth beamforming is achieved in the radiative near-field through the conventional matched filtering approach, where the phase of each radiating source is set to compensate for the path difference between the focal point $\rf$ and the source to achieve constructive interference at the focal point. Suppose we focus the beam at $\rf$; the array gain at varying distances $r \in [1.2D \ \infty]$, but same azimuth $\varphi$ and elevation angle $\theta$ is given as
{\begin{equation}\small
\GR = \left| \mathbf{w}\left(r,\theta,\varphi \right)^{\mathsf{H}} \mathbf{b}\left(\rf,\theta,\varphi \right) \right|, 
\label{eqn_beamforming}
\end{equation}
\normalsize}
where $\mathbf{w} (r, \theta,\varphi)$ is the beamforming vector. 
\begin{theorem} 
The normalized array gain obtained by near-field beamforming for a \ac{UCA} can be approximated as follows
{\begin{equation}\small
\begin{aligned}
\GR = \frac{1}{\NBS} \left| \sum_{n=1}^{\NBS} e^{j \frac{2\pi}{\lambda} \left\{\frac{R^2}{2} r_\mathrm{eff} 
\left( 1 - \sin^2 \theta \cos^2(\varphi - \psi_n) \right) \right\}} \right|
\approx \left| J_0(\zeta) \right|
\end{aligned}
\label{eqn_IIIA_1}
\end{equation}
\normalsize}
where $\zeta = \frac{\pi \RDUCA}{16} r_{\mathrm{eff}} \sin^2\theta$, $\RDUCA$ is the Rayleigh distance of the \ac{UCA}, and $r_{\mathrm{eff}} = \left| \frac{r - r_f}{r r_f} \right|$.
\label{theorem1}
\end{theorem}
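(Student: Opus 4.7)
The plan is to start from the definition \eqref{eqn_beamforming}, substitute the matched-filter beamforming vector $\mathbf{w}(r,\theta,\varphi)=\mathbf{b}^{\mathrm{H}}(r,\theta,\varphi)$, reduce the gain to a Riemann sum over the \ac{UCA} angular positions $\psi_n$, then pass to the large-$\NBS$ continuous limit and recognize the resulting integral as a Bessel function of the first kind of order zero.

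First I would insert the Taylor-expanded distance \eqref{eqn_III_2} into the beamforming gain formula for both the evaluation range $r$ and the focal range $\rf$ (sharing the same $\theta,\varphi$). Since the linear-in-$R$ term $-R\sin\theta\cos(\varphi-\psi_n)$ is identical in both phases, it cancels in the conjugate product between the beamforming weights and the array response. What survives is the quadratic curvature term multiplied by the focusing mismatch $\tfrac{1}{r}-\tfrac{1}{\rf}$, whose magnitude equals $r_\mathrm{eff}$. Collecting these terms reproduces the first equality stated in \eqref{eqn_IIIA_1}.

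Next I would invoke the Riemann-sum approximation, valid for large $\NBS$ with step $\Delta\psi=2\pi/\NBS$, to replace the discrete sum over $n$ by an integral over $\psi\in[0,2\pi)$. The $\psi$-independent factor $\exp(j\pi R^{2} r_\mathrm{eff}/\lambda)$ has unit modulus and drops out of the absolute value. Applying the double-angle identity $\cos^{2}x=\tfrac{1}{2}(1+\cos 2x)$ splits the remaining exponent into a second constant piece (likewise absorbed into a unit-modulus prefactor) and a $\cos\bigl(2(\varphi-\psi)\bigr)$ piece. The substitution $\phi=2(\varphi-\psi)$ together with $2\pi$-periodicity of the integrand reduces it to the canonical form
\begin{equation}
J_{0}(x)=\frac{1}{2\pi}\int_{0}^{2\pi} e^{-jx\cos\phi}\,d\phi,
\end{equation}
with argument $x=\tfrac{\pi R^{2}}{2\lambda}\,r_\mathrm{eff}\sin^{2}\theta$. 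Since the \ac{UCA} aperture is $D=2R$, the Rayleigh distance obeys $\RDUCA=8R^{2}/\lambda$, so that $x=\tfrac{\pi\RDUCA}{16}\,r_\mathrm{eff}\sin^{2}\theta=\zeta$, completing the identification $\GR\approx|J_{0}(\zeta)|$.

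The main obstacle is justifying the Riemann-sum approximation for an oscillatory integrand: when $\zeta$ is large the exponential oscillates many times between adjacent $\psi_n$, so the quadrature error is controlled only when $\NBS$ is sufficiently large compared to $\zeta$. A secondary concern is the validity of the second-order Taylor expansion \eqref{eqn_III_2}, which requires $r,\rf\ge 1.2D$ so that the omitted higher-order terms in $R/r$ contribute only sub-wavelength phase. Both issues are standard in near-field analysis and are sidestepped here by the near-field regime assumption and by dense sampling of the circumference relative to the wavelength; the remainder of the argument is algebraic rearrangement plus the Bessel integral identity.
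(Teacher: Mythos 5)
Your proposal is correct and follows essentially the same route as the paper's Appendix A: cancel the linear phase term via matched filtering, pull out the unit-modulus constants, apply the double-angle identity, and pass from the sum over $\psi_n$ to the circular integral defining $J_0(\zeta)$. Your explicit change of variables $\phi = 2(\varphi-\psi)$ and your caveat about the Riemann-sum error for oscillatory integrands when $\zeta$ is large relative to $\NBS$ are slightly more careful than the paper's treatment, but they do not constitute a different argument.
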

\begin{proof}
The proof is provided in \textbf{Appendix A}. 
\end{proof}
\subsection{Beamdepth}
We define the beamdepth $\BD$ as the distance interval $r \in [\rf^\mathrm{min}, \rf^\mathrm{max}]$ where normalized array gain $\GR$ is at most $\unit[3]{dB}$ lower than its maximum value. 
\newtheorem{corollary}{Corollary}
\begin{corollary}
For a \ac{UCA}, the beamdepth $\BDUCA$ obtained by focusing a beam at a distance $\rf$ from the \ac{BS} is given by
\begin{equation}
\BDUCA = 
\begin{cases}
\frac{32 \pi\alpha_{\mathrm{\scalebox{0.5}{3dB}}} \RDUCA \rf^2 \sin^2{\theta}} {(\pi \RDUCA \sin^2{\theta)^2} - ({16 \rf \alpha_{\mathrm{\scalebox{0.5}{3dB}}} })^2}, 
& \rf < \frac{\pi \RDUCA}{16 \alpha_{\mathrm{\scalebox{0.5}{3dB}}}}\sin^2{(\theta)} \\[10pt]
\infty, 
& \rf \geq \frac{\pi \RDUCA}{16 \alpha_{\mathrm{\scalebox{0.5}{3dB}}}}\sin^2{(\theta)}
\end{cases}
\label{eqn_IIIA_2}
\end{equation}
\label{theorem2}
\end{corollary}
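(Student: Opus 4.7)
The plan is to invoke Theorem~\ref{theorem1} to replace $\GR$ by $|J_0(\zeta)|$ and then translate the $3$\,dB condition into an explicit interval for $r$ around the focal point $\rf$. First I would observe that $|J_0(\zeta)|$ attains its global maximum of $1$ at $\zeta=0$, which, from $\zeta=\tfrac{\pi\RDUCA}{16}\,r_\mathrm{eff}\sin^2\theta$ and $r_\mathrm{eff}=\abs{(r-\rf)/(r\rf)}$, corresponds exactly to $r=\rf$. Letting $\alpha_\mathrm{3dB}$ denote the smallest positive argument at which $|J_0|$ drops to $1/\sqrt{2}$, the $3$\,dB condition $\GR\geq 1/\sqrt{2}$ becomes $|\zeta|\leq \alpha_\mathrm{3dB}$, i.e.,
\[
\abs{\tfrac{r-\rf}{r\,\rf}} \;\leq\; \kappa,\qquad \kappa \;=\; \frac{16\,\alpha_\mathrm{3dB}}{\pi\,\RDUCA \sin^2\theta}.
\]

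Next I would strip the absolute value by splitting into the cases $r\geq \rf$ and $r\leq \rf$. Each case reduces to a linear inequality in $r$ whose solution yields the endpoints of the $3$\,dB interval as $\rf^\mathrm{max}=\rf/(1-\kappa\rf)$ and $\rf^\mathrm{min}=\rf/(1+\kappa\rf)$. The beamdepth then follows from a direct common-denominator simplification,
\[
\BDUCA \;=\; \rf^\mathrm{max}-\rf^\mathrm{min} \;=\; \frac{2\kappa\,\rf^2}{1-\kappa^2\rf^2},
\]
and substituting the definition of $\kappa$ and clearing denominators by $(\pi\RDUCA\sin^2\theta)^2$ reproduces the first branch of \eqref{eqn_IIIA_2}.

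The only subtle point, which I expect to be the main obstacle, is the degenerate regime. When $1-\kappa\rf\leq 0$, equivalently $\rf\geq \pi\RDUCA\sin^2\theta/(16\,\alpha_\mathrm{3dB})$, the inequality for the case $r\geq \rf$ imposes no effective upper bound on $r$: the $3$\,dB condition is satisfied for arbitrarily large $r$, so $\rf^\mathrm{max}=\infty$ and the beamdepth diverges. This must be separated out carefully and stated as the second branch of the corollary; physically it signals the loss of range-domain focusing, which is precisely the phenomenon that motivates the subsequent introduction of the \ac{EBRD}. A minor secondary point is that the $J_0$ approximation is only tight away from the roots of $J_0$, so one should note that the $3$\,dB root $\alpha_\mathrm{3dB}$ is taken to be the first crossing in order to ensure a single, contiguous $3$\,dB lobe.
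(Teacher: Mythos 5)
Your proposal is correct and follows essentially the same route as the paper's Appendix B: impose the threshold condition $\zeta=\alpha_{\mathrm{\scalebox{0.5}{3dB}}}$ on the $J_0$ approximation from Theorem~\ref{theorem1}, solve the resulting equation in $r$ for its two roots $\rf^{\mathrm{min}},\rf^{\mathrm{max}}$, and subtract; your explicit treatment of the degenerate case $1-\kappa\rf\le 0$ is a welcome addition, since the paper only addresses that branch later in Appendix C. One small point: your expression $\BDUCA=\frac{2\kappa\rf^2}{1-\kappa^2\rf^2}$, after substituting $\kappa$, yields a numerator of $32\,\alpha_{\mathrm{\scalebox{0.5}{3dB}}}\,\pi\RDUCA\rf^2\sin^2\theta$ (as does the paper's own root-difference computation), so it does not literally match the first branch of \eqref{eqn_IIIA_2}, whose numerator omits the factor $\alpha_{\mathrm{\scalebox{0.5}{3dB}}}$; your algebra is the correct one, and the stated corollary appears to carry a typo rather than your derivation containing an error.
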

\begin{proof}
The proof is provided in \textbf{Appendix B}. 
\end{proof}
Note that although $\BDUCA$ in \eqref{eqn_IIIA_2} is independent of azimuth angle $\varphi$, it depends on the elevation angle $\theta$. Furthermore, $\BDUCA$ increases as the focus distance $\rf$ increases or the elevation angle moves towards the boresight direction ($\theta = 0^{\circ}$).

\subsection{Effective Beamfocusing Rayleigh Distance (EBRD)} 
Finite depth beams are achieved in the near-field, only when the focus distance $\rf$ lies within a certain distance limit for a given elevation angle $\theta$ in \eqref{eqn_IIIA_2}. We derive this limit for a \ac{UCA} and refer to it as \ac{EBRD}. 
\begin{corollary}
The farthest distance at which beamfocusing for a \ac{UCA} can be achieved is $\rf < \frac{\pi \RDUCA}{16 \alpha_{\mathrm{\scalebox{0.5}{3dB}}}}\sin^2{(\theta)}$.
\label{theorem3}\end{corollary}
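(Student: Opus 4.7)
The plan is to obtain this bound as an almost immediate consequence of the beamdepth characterization already established in Corollary 1. The notion of ``effective beamfocusing'' requires a finite beamdepth, i.e., a finite interval $[\rf^{\min},\rf^{\max}]$ within which the normalized array gain stays within 3\,dB of its peak; otherwise, energy cannot be localized along the range dimension and no focusing is realized. Inspecting the piecewise expression \eqref{eqn_IIIA_2}, the beamdepth $\BDUCA$ is finite precisely when $\rf < \frac{\pi \RDUCA}{16 \alpha_{\mathrm{\scalebox{0.5}{3dB}}}}\sin^2{(\theta)}$, while the complementary case yields $\BDUCA = \infty$. The first step of the proof is therefore to invoke Corollary 1 directly and identify the breakpoint of its first branch as the \ac{EBRD}.

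To reinforce this with geometric intuition, I would additionally re-derive the threshold from the array gain $\GR \approx |J_0(\zeta)|$ in Theorem 1, with $\zeta = \tfrac{\pi \RDUCA}{16}\, r_{\mathrm{eff}} \sin^2\theta$ and $r_{\mathrm{eff}} = \abs{(r-\rf)/(r\rf)}$. The peak gain occurs at $r=\rf$, where $\zeta=0$ and $J_0(0)=1$. Since $|J_0|$ is strictly decreasing on $[0,\alpha_{\mathrm{\scalebox{0.5}{3dB}}}]$, a 3\,dB crossing on the far side $r>\rf$ exists iff $\zeta$ attains $\alpha_{\mathrm{\scalebox{0.5}{3dB}}}$ on that interval. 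For $r>\rf$, writing $r_{\mathrm{eff}}(r)=1/\rf-1/r$ shows the map is strictly increasing in $r$ with derivative $1/r^{2}>0$ and supremum $1/\rf$ attained as $r\to\infty$. Hence the supremum of $\zeta$ on $(\rf,\infty)$ is $\zeta_{\max}=\tfrac{\pi \RDUCA}{16\rf}\sin^2\theta$.

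Requiring $\zeta_{\max} > \alpha_{\mathrm{\scalebox{0.5}{3dB}}}$, which by continuity and monotonicity of $|J_0|$ is necessary and sufficient for the 3\,dB level to be reached at some finite $r>\rf$, rearranges to $\rf < \frac{\pi \RDUCA}{16 \alpha_{\mathrm{\scalebox{0.5}{3dB}}}}\sin^2{(\theta)}$, which is exactly the claim. I do not anticipate any substantial obstacle; the main care needed is in the limit argument that identifies $1/\rf$ as the supremum of $r_{\mathrm{eff}}$ and in handling the boundary case $\zeta_{\max}=\alpha_{\mathrm{\scalebox{0.5}{3dB}}}$, where the 3\,dB level is only approached asymptotically as $r\to\infty$ and so produces no finite upper edge; this case is therefore properly excluded by the strict inequality.
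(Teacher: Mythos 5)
Your proposal is correct, and its first step is essentially the paper's own proof: the paper simply reads the threshold off Corollary~\ref{theorem2} by noting that the denominator $(\pi \RDUCA \sin^2\theta)^2 - (16\rf\alpha_{\mathrm{\scalebox{0.5}{3dB}}})^2$ of the finite branch vanishes at $\rf = \frac{\pi\RDUCA}{16\alpha_{\mathrm{\scalebox{0.5}{3dB}}}}\sin^2\theta$, beyond which the beamdepth is infinite. Your supplementary derivation goes a bit further than the paper does: by writing $r_{\mathrm{eff}}(r)=1/\rf-1/r$ for $r>\rf$ and observing that it increases to the supremum $1/\rf$, you show that the far-side $3\,\mathrm{dB}$ crossing of $\GR\approx|J_0(\zeta)|$ exists at finite $r$ if and only if $\zeta_{\max}=\frac{\pi\RDUCA}{16\rf}\sin^2\theta$ exceeds $\alpha_{\mathrm{\scalebox{0.5}{3dB}}}$, which is precisely the mechanism the paper's one-line denominator argument leaves implicit (including why the boundary case gives no finite upper edge). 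Both routes land on the same bound; yours makes the necessity-and-sufficiency of the condition explicit rather than inferring it from the sign of an algebraic expression.
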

\begin{proof}
The proof is provided in \textbf{Appendix C}. 
\end{proof}
\ac{EBRD} is minimum in the boresight direction ($\theta=0^\circ$) and increases at end-fire directions ($\theta=\pm 90^\circ$). The near-field codebook proposed in~\cite{10243590} can be extended to three-dimensional scenarios by utilizing beamdepth and the \ac{EBRD}. Similar to~\cite{10934779}, for a given pair of azimuth and elevation angles, multiple range samples are generated. Starting from a minimum distance, these range samples are generated with beamdepth-based spacing until the \ac{EBRD} limit is reached. This sampling procedure is then repeated for all angular directions to construct the complete 3D codebook.


\section{Comparative analysis with ULA} \label{Sec_IV}
In this section, we investigate the beamdepth and the \ac{EBRD} to characterize spatial correlation in the distance domain for both \ac{ULA} and \ac{UCA}.
\subsection{Beamdepth and EBRD}
A narrow beamdepth and extended \ac{EBRD} facilitate reduced inter-user interference, thereby enhancing spatial multiplexing. In this subsection, we compare the spatial correlation of the \ac{ULA} and \ac{UCA} in terms of beamdepth and forelobes under the constraint of fixed antenna element count and fixed aperture length. First, we compare the aperture lengths of the arrays under the constraint of equal antenna element count.
\begin{theorem} 
For equal antenna count, the $ \ac{UCA} $ has an aperture length reduced by a factor of $ \pi $ relative to the $ \ac{ULA} $.
\label{theorem4}
\end{theorem}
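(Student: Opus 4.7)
The plan is a direct geometric calculation under the standard half-wavelength inter-element spacing convention that is implicit in the channel model of Section~\ref{Sec_II}. First I would fix the common convention: adjacent antenna elements are separated by $\lambda/2$, which for the \ac{ULA} places its aperture length at $\DULA = (\NBS-1)\lambda/2 \approx \NBS \lambda /2$ when $\NBS$ is large.

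Next I would pass to the \ac{UCA}. Using the parametrization $\psi_n = 2\pi n/\NBS$ introduced in Section~\ref{Sec_II}, adjacent elements on the circle of radius $R$ are separated by arc length $2\pi R/\NBS$. Equating this to the half-wavelength spacing yields $2\pi R/\NBS = \lambda/2$, i.e. $R = \NBS \lambda/(4\pi)$. Taking the aperture length of the \ac{UCA} to be its diameter, $\DUCA = 2R = \NBS\lambda/(2\pi)$, the ratio $\DULA/\DUCA = \pi$ falls out immediately, which is exactly the claim.

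There is very little to be careful about beyond a clean statement of the hypotheses. The only substantive modeling choice is what one calls the ``aperture length'' of a circular array; I would justify using the diameter by noting that it is the largest linear extent of the array and therefore the quantity that enters the Rayleigh distance $\RD = 2D^2/\lambda$ used throughout the paper, ensuring that the subsequent comparisons of $\RDULA$ and $\RDUCA$ in Section~\ref{Sec_IV} remain consistent. A short remark that the arc-length spacing $2\pi R/\NBS = \lambda/2$ is the \ac{UCA} analogue of the critical spatial sampling rate used for the \ac{ULA} would complete the justification.

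The only mild obstacle is cosmetic rather than mathematical: one must be explicit that the $\NBS-1 \approx \NBS$ approximation is used for the \ac{ULA} so that the factor of $\pi$ is exact and not off by a $(\NBS-1)/\NBS$ term. Since the paper already works in the large-$\NBS$ \ac{UM}-\ac{MIMO} regime, this approximation is harmless and can be stated in one line.
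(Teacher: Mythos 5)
Your proposal is correct and follows essentially the same route as the paper: both equate the inter-element arc spacing of the \ac{UCA} to the \ac{ULA} spacing $d$, so that the circumference $2\pi R = \NBS d$ gives a diameter $\DUCA = \NBS d/\pi \approx \DULA/\pi$ (your specialization to $d=\lambda/2$ is immaterial since $d$ cancels in the ratio). Your added remarks on taking the diameter as the aperture and on the $\NBS-1\approx\NBS$ approximation are sensible clarifications of steps the paper leaves implicit, but they do not change the argument.
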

\begin{proof}
For $\NBS$ antenna elements, the aperture length of a \ac{ULA} is $\DULA \approx \NBS d$. In contrast, the antenna elements in a \ac{UCA} are placed on a circle of circumference $\pi \DUCA =\NBS d$, resulting in an effective aperture length of $\DUCA = \frac{\NBS d}{\pi}$. Therefore, $\DUCA = \frac{\DULA}{\pi}$, which completes the proof. Also $\RDULA=\frac{2\DULA^2}{\lambda}$, exceeds $\RDUCA=\frac{2\DUCA^2}{\lambda}$ by a factor of $\pi^2$.
\end{proof}

The \ac{ULA} gain function is given by $\mathcal{G}_{\mathrm{ula}} \approx \left| \frac{C^2(\gamma) + S^2(\gamma)}{\gamma^2} \right|$, where $C(\gamma)$ and $S(\gamma)$ denote the Fresnel integrals. The parameter $\gamma$ is defined as $\gamma = \sqrt{\frac{N^2 d^2 \cos^2(\varphi)}{2\lambda}r_{\mathrm{eff}}}$. 
Furthermore, $\alpha_{\mathrm{\scalebox{0.5}{3dB}}} \stackrel{\Delta}{=} \left\{ \gamma \mid \left|\mathcal{G}_{\mathrm{ula}}(\gamma)\right| = 0.5 \right\}$. The beamdepth $\BDULA$ and the corresponding \ac{EBRD} for a \ac{ULA} are given by \cite{10934779}
\begin{equation}\small
\BDULA = 
\begin{cases}
\displaystyle \frac{8 \alpha_\mathrm{\scalebox{.5}{3dB}}  \RDULA \rf^2 \cos^2{\varphi}}{ \left( \RDULA \cos^2{\varphi} \right)^2 - \left(4 \alpha_\mathrm{\scalebox{.5}{3dB}} \rf \right)^2}, 
& \rf < \displaystyle\frac{\RDULA}{4 \alpha_\mathrm{\scalebox{.5}{3dB}}} \cos^2{\varphi} \\[10pt]
\infty, 
& \rf \geq \displaystyle\frac{\RDULA}{4 \alpha_\mathrm{\scalebox{.5}{3dB}}} \cos^2{(\varphi)}
\end{cases}
\label{eqn11}
\end{equation}
\normalsize
where $\frac{\RDULA}{4 \, \alpha_\mathrm{\scalebox{.5}{3dB}}} \cos^2{\varphi}$ represents the \ac{EBRD}, with $\RDULA$ denoting the Rayleigh distance of the \ac{ULA}. Furthermore, $\varphi$ is the angle between a line perpendicular to the array face and the array axis.
 Notably, the \ac{EBRD} for a \ac{ULA} is maximized in the boresight direction, i.e., at $\varphi = 0^{\circ}$ in \eqref{eqn11}, whereas the \ac{UCA} exhibits its minimum \ac{EBRD} at $\theta=0^{\circ}$ (boresight direction), as given in Corollary~\ref{theorem3}. Consequently, \acp{UE} located at the boresight of the \ac{UCA} or in the endfire direction of the \ac{ULA} may not fully benefit from near-field beamfocusing. Note that the rotational symmetry of a \ac{UCA} is confined to the azimuth plane and does not extend to the elevation dimension. Consequently, the beamfocusing characteristics of a \ac{UCA} remain invariant with respect to the azimuth but depend on the elevation angle.


\begin{figure*}[!t]
  \centering
  \begin{minipage}[b]{0.32\textwidth}
    \centering
    \includegraphics[width=\textwidth]{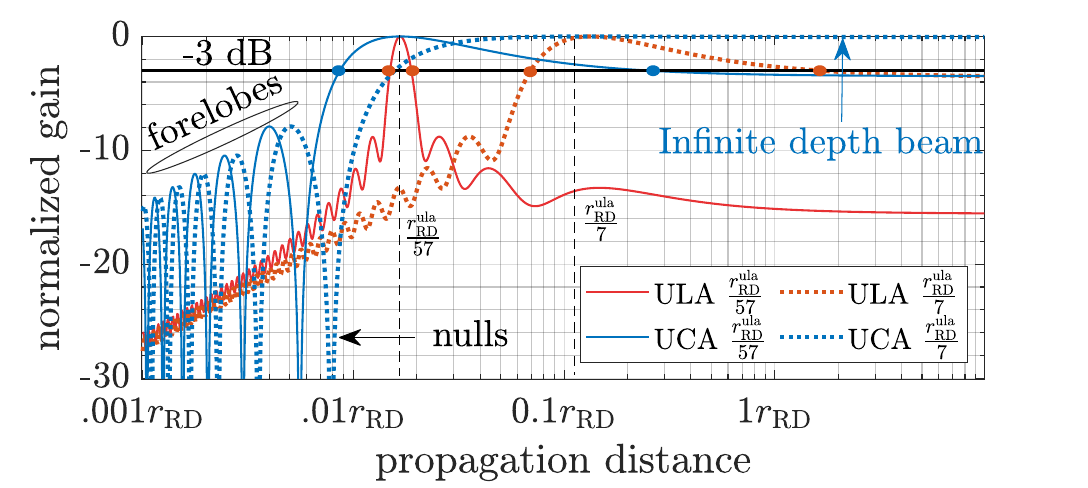}
    \caption{Beamdepth comparison for fixed $\NBS=256$.}
    \label{fig4_BD_same_N.pdf}
  \end{minipage}%
  \hspace{0.1mm}
  \begin{minipage}[b]{0.32\textwidth}
    \centering
    \includegraphics[width=\textwidth]{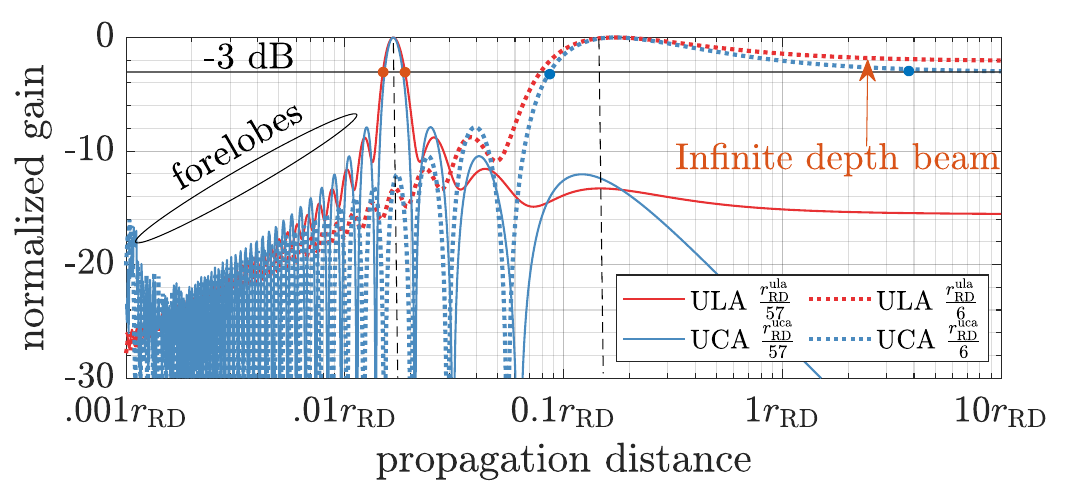}
    \caption{ Beamdepth comparison for fixed aperture length of $\unit[1.36]{m}$.}
    \label{fig3_BD_same_aperture}
  \end{minipage}
  \hspace{.1mm}
  \begin{minipage}[b]{0.32\textwidth}
    \centering
    \includegraphics[width=\textwidth]{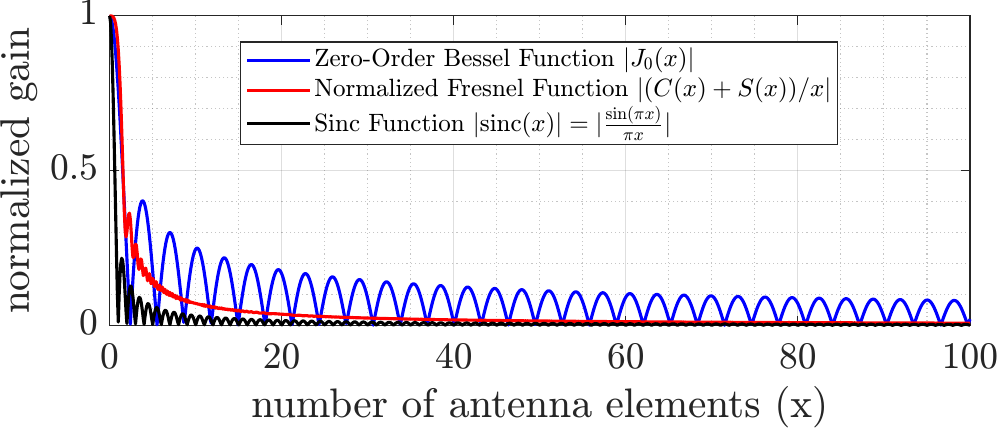}
    \caption{Comparison of array gain functions for \ac{ULA} vs \ac{UCA}.}
    \label{fig2_Bessel_vs_Fresnel}
  \end{minipage}
\end{figure*}


\subsubsection{Fixed Number of Antenna Elements}\label{IV-A_1}
\textbf{Under a fixed antenna count $\NBS$, the \ac{ULA} achieves a narrower beamdepth and a more extended \ac{EBRD} compared to the \ac{UCA}}. When comparing the beamdepth at the same focus distance, $\RD$ is the only significant geometry-dependent parameter in \eqref{eqn_IIIA_2} and \eqref{eqn11}. Both expressions indicate that beamdepth decreases with increasing $\RD$. As stated in Theorem \ref{theorem4}, since $\DULA > \DUCA$, it follows that $\RDULA > \RDUCA$, leading to a narrower beamdepth for the ULA compared to the UCA. To further illustrate this, Fig.~\ref{fig4_BD_same_N.pdf} shows the normalized array gain as a function of propagation distance. The distance interval between the 3\ dB gain points defines the beamdepth. For a fair comparison, we choose $\theta = 90^\circ$ for the \ac{UCA} and $\varphi = 0^\circ$ for the \ac{ULA}, as these angles minimize the beamdepth according to \eqref{eqn_IIIA_2} and \eqref{eqn11}, respectively. When focusing the beam at a near-field distance of $\rf = \RDULA/57$, the ULA achieves a significantly narrow beamdepth of $\unit[1.4]{m}$, compared to $\unit[84]{m}$ for the UCA. Here $\RDULA = \unit[348]{m}$ and $\RDUCA = \unit[35]{m}$. Furthermore, \ac{ULA} achieves finite depth beams up till (\ac{EBRD}) $\RDULA/7 =\unit[49.7]{m}$; the same limit is also obtained from \eqref{eqn11}, where $\alpha_\mathrm{\scalebox{.5}{3dB}} =1.75$ for $\varphi=0^\circ$. On the other hand \ac{EBRD} for \ac{UCA} is only $\RDULA/57=\RDUCA/6 \approx \unit[6]{m}$. As shown in Fig.~\ref{fig4_BD_same_N.pdf}, ULA maintains a finite beamdepth up to $\RDULA/7$, while it becomes infinite for \ac{UCA} highlighted by the dotted blue line for \ac{UCA} at $\RDULA/7$, which stays above the $\unit[3]{dB}$ threshold beyond the focusing point. Fig.~\ref{fig4_BD_same_N.pdf} also illustrates the spatial correlation between a \ac{UE} located at the focal point $\rf$ and \acp{UE} positioned outside the interval $r \in [\rf^\mathrm{min}, \rf^\mathrm{max}]$. \textbf{Notably, the forelobes of the \ac{UCA} exceed those of the \ac{ULA} by approximately $\unit[8$–$10]{dB}$, indicating higher spatial correlation for \ac{UCA} between the \ac{UE} at $\rf$ and \acp{UE} outside the beamdepth region}. This elevated correlation may adversely impact the capacity performance of the \ac{UCA}. Conversely, the \ac{UCA} can form nulls in the spatial correlation pattern, which are not achievable with the \ac{ULA}, as also shown in Fig.~\ref{fig4_BD_same_N.pdf}. These nulls may be desirable in adaptive beamforming and secure communication.

\subsubsection{Fixed Aperture Length}
\textbf{For a fixed aperture length, the \ac{UCA} demonstrates a slightly narrower beamdepth and a marginally larger \ac{EBRD} compared to the \ac{ULA}}. Achieving equal aperture lengths for both array configurations in Fig.~\ref{fig3_BD_same_aperture} requires a significantly higher number of antenna elements for the \ac{UCA}. For instance, at a carrier frequency of \unit[28]{GHz}, matching aperture lengths of $\unit[1.36]{m}$ necessitate 256 elements for the \ac{ULA}, whereas the \ac{UCA} requires 801 elements. In this case, $\DULA = \DUCA$ implies $\RDULA = \RDUCA$, and the slightly narrower beamdepth for the \ac{UCA} in \eqref{eqn_IIIA_2} compared to \ac{ULA} in \eqref{eqn11} arises from differences in $\alpha_{\mathrm{\scalebox{0.5}{3dB}}}$ and constant factors. As shown in Fig.~\ref{fig3_BD_same_aperture}, when the beam is focused at $\RD^\mathrm{ula}/57$, the \ac{UCA} achieves a slightly narrower beamdepth compared to the \ac{ULA}. In the boresight scenario, the \ac{EBRD} of the \ac{UCA} is approximately $\RDUCA/6$, which slightly exceeds that of the \ac{ULA}, $\RDULA/7$. The \ac{EBRD} for \ac{UCA} can also be derived directly from Corollary~\ref{theorem3}, using $\alpha_\mathrm{\scalebox{.5}{3dB}} = 1.2$. At $\RDUCA/6$, the \ac{UCA} maintains a well-defined finite-depth beam, whereas the \ac{ULA} exhibits an infinite-depth beam—evident from the trailing edge at $\RDULA/6$ (red dotted curve) remaining above the 0.5 gain threshold. Additionally, despite having the same aperture length, the forelobes of the \ac{UCA} exceed those of the \ac{ULA} by approximately $\unit[1$--$2]{dB}$ across all distances. As a result, the \ac{UCA} exhibits stronger spatial correlation between a \ac{UE} at $\rf$ and surrounding \acp{UE} outside the beamdepth region.


\subsection{Asymptotic Orthogonality}
Channels $\mathbf{h}_k$ and $\mathbf{h}_j$ are asymptotically orthogonal if $\displaystyle \lim_{N_\mathrm{BS} \to \infty} \frac{|\mathbf{h}_k^{\mathrm{H}} \mathbf{h}_j|}{N_\mathrm{BS}} \to 0$. 
For a \ac{ULA}, the distance-domain array gain is given in \eqref{eqn11}. In the limit as $N_\mathrm{BS} \to \infty$, $\gamma \to \infty$, and using the property $C(\infty) = S(\infty) = 0.5$, it follows that $\GU \to 0$. For a \ac{UCA}, the array gain in the distance domain is characterized by the zero-order Bessel function $J_0(\zeta)$ as shown in~\eqref{eqn_IIIA_1}. The asymptotic form $J_0(\zeta) \sim \sqrt{\frac{2}{\pi \zeta}} \cos\left(\zeta - \frac{\pi}{4}\right)$~\cite{olver1954asymptotic} implies $J_0(\zeta) \to 0$ as $\zeta \to \infty$. Since $\zeta$ is proportional to the array radius $R$, increasing $N_\mathrm{BS}$ increases $R$ and consequently $\zeta$, ensuring $\mathcal{G}_{\mathrm{uca},jk} \to 0$ for \ac{UCA} as well. The angular-domain array gain for the \ac{ULA} is governed by the $\mathrm{sinc}$ function, while for the \ac{UCA}, it is defined by the Bessel function. The asymptotic orthogonality of both array types in angular and distance domains is established in the literature. For further comparison, Fig.~\ref{fig2_Bessel_vs_Fresnel} illustrates the decay behavior of these array gain functions, with $x$ denoting a scaled variable proportional to $N_\mathrm{BS}$. \textbf{While both \ac{ULA} and \ac{UCA} achieve asymptotic orthogonality, their array gain functions exhibit distinct decay profiles: the $\mathrm{sinc}$ and Fresnel integrals decay monotonically as $1/x$, whereas the Bessel function decays as $1/\sqrt{x}$ with oscillatory behavior}. Consequently, for a given $N_\mathrm{BS}$, the $\mathrm{sinc}$ function yields the lowest correlation, followed by the Fresnel and Bessel functions. This indicates that \ac{ULA} multiplexes a higher number of \acp{UE} in the angular domain, followed by its distance domain, and then by the distance/angle domain of the \ac{UCA}.

\section{Simulation Results} \label{Sec_V}
In this section, we conduct Monte Carlo simulations to compare the achievable sum-rate of the \ac{UCA} with other array configurations. The primary benchmark is the \ac{ULA}, which employs a one-dimensional array geometry. In addition, we include comparisons with the \ac{URA}, since the \ac{UCA} provides spatial resolution in both azimuth and elevation dimensions. The simulations are performed at a carrier frequency of $\unit[30]{GHz}$ and an \ac{SNR} of $\unit[15]{dB}$. The \acp{UE} are uniformly distributed in distance, azimuth, and elevation, i.e., $r \sim \mathcal{U}[1.2D, \EBRD]$, $\varphi \sim \mathcal{U}[-\pi/2, \pi/2]$, and $\theta \sim \mathcal{U}[-\pi/2, \pi/2]$. We consider a multipath channel model consisting of $L=5$ N\ac{LoS} components. The Rician factor is set to $\kappa = 10$.

Fig.~\ref{fig5_sumrate_same_N.pdf} illustrates the achievable sum-rate as a function of the number of users. We compare \ac{UCA} and \ac{URA} configurations under a fixed number of antenna elements, $\NBS = 256$. Specifically, the considered \ac{URA} geometries are $\{1\times256,\;2\times128,\;4\times64,\;16\times16\}$. The corresponding aperture lengths, determined by the array diagonal for \acp{URA}, are $\unit[1.27]{m}$, $\unit[0.63]{m}$, $\unit[0.32]{m}$, and $\unit[0.11]{m}$, respectively, while the \ac{UCA} has an aperture length of $\unit[0.40]{m}$. Among the \ac{URA} configurations, elongated geometries such as the \ac{ULA} exhibit the largest aperture, whereas the square \ac{URA} yields the smallest. As shown in Fig.~\ref{fig5_sumrate_same_N.pdf}, the \ac{ULA} achieves the highest sum-rate due to its larger aperture length, which results in narrower beamdepth and an extended \ac{EBRD}. In contrast, the \ac{UCA} attains the lowest sum-rate due to wider beamdepth and higher forelobe levels. Among the \ac{URA} configurations, the square \ac{URA} yields the lowest sum-rate, consistent with its wider beamdepth and reduced \ac{EBRD} caused by the smaller aperture. The array gain of a \ac{URA} in the distance domain is governed by a two-dimensional Fresnel function \cite{10934779}, while its angular response in azimuth and elevation follows sinc functions. Consequently, the beam pattern analysis developed for the \ac{ULA} extends to \ac{URA} configurations. As a result, the \ac{UCA} exhibits higher forelobe levels compared to \acp{URA}, leading to a degraded sum-rate. Under certain \ac{UE} distributions, such as when \acp{UE} are located near the endfire directions, a \ac{UCA} may achieve a higher sum-rate since its beamdepth remains invariant with respect to the azimuth angle.

\begin{figure}[t]
\centering
\includegraphics[width=1\linewidth]{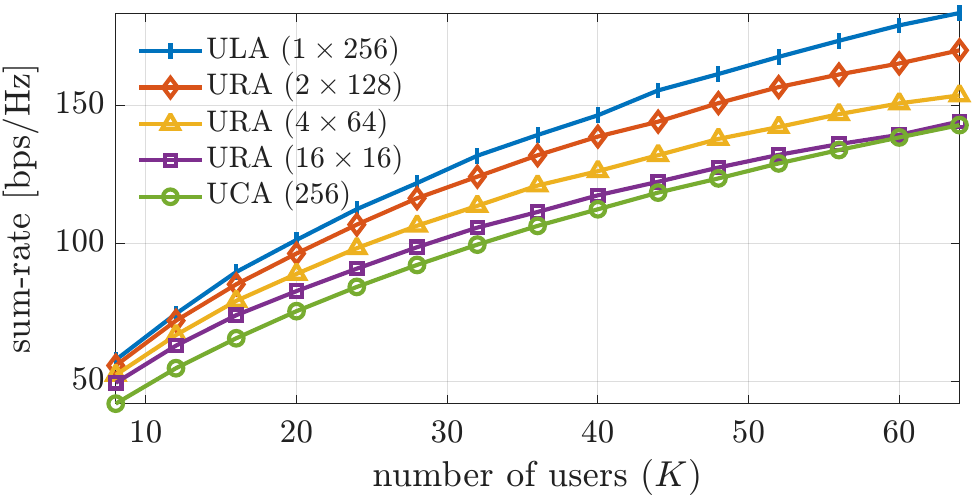}
\caption{Sum-rate comparison for fixed antenna, $\NBS=256$.}
\label{fig5_sumrate_same_N.pdf}
\end{figure}

\begin{figure}[t]
\centering
\includegraphics[width=1\columnwidth]{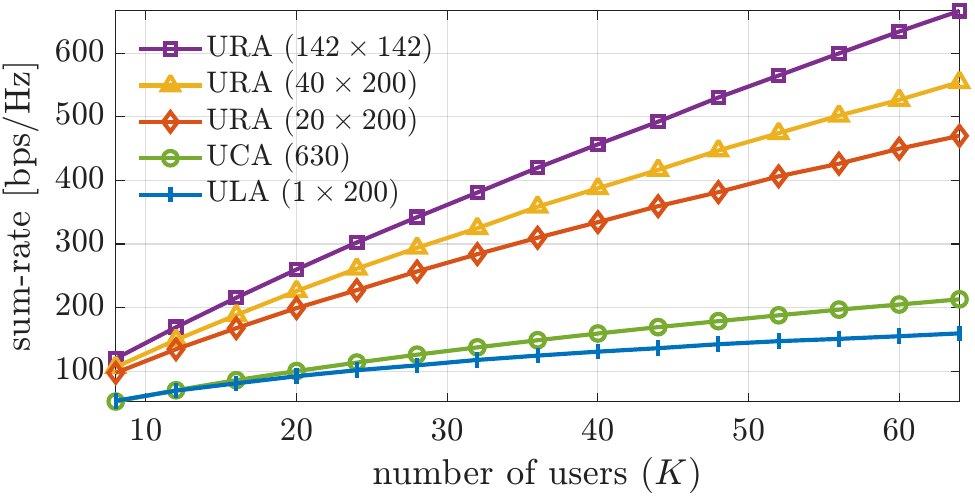}
\caption{Sum-rate comparison for fixed aperture length, $D = \unit[1]{m}$.}
\setlength{\belowcaptionskip}{-20pt}
\setlength{\abovecaptionskip}{0pt}
\label{fig6_sumrate_same_aperture}
\end{figure}

Fig.~\ref{fig6_sumrate_same_aperture} compares the achievable sum-rate under a fixed aperture length of $\unit[1]{m}$. Under this constraint, the square \ac{URA} achieves the highest sum-rate performance. When the aperture length is fixed, the influence of array geometry on beamdepth and \ac{EBRD} is significantly reduced, and the beamforming gain associated with the number of antenna elements becomes the dominant factor. Among the considered geometries, the square \ac{URA} accommodates the largest number of antenna elements for a given aperture length, thereby providing the highest beamforming gain and the lowest forelobe levels among the \ac{URA} configurations. 
In contrast, the \ac{ULA} yields the lowest sum-rate since it accommodates the fewest antenna elements under the same aperture constraint. Notably, the sum-rate difference between the \ac{ULA} and the \ac{UCA} remains marginal. For high user loading, the \ac{UCA} achieves a slightly higher sum-rate due to its marginally narrower beamdepth compared to the \ac{ULA} under a fixed aperture length.

\section{Conclusion}
In this paper, we compared the spatial correlation between \acp{UE} for \ac{ULA} and \ac{UCA} configurations. Our results indicate that the capacity performance of the \ac{UCA} is limited due to high spatial correlation, despite the 2D array geometry. As a direction for future work, it would be interesting to analyze and compare the spatial correlation and corresponding capacity of uniform rectangular arrays and uniform concentric circular arrays in the near-field.

\bibliographystyle{IEEEtran}
\bibliography{IEEEabrv,my2bib}

\begin{thebibliography}{1}
\providecommand{\url}[1]{#1}
\csname url@samestyle\endcsname
\providecommand{\newblock}{\relax}
\providecommand{\bibinfo}[2]{#2}
\providecommand{\BIBentrySTDinterwordspacing}{\spaceskip=0pt\relax}
\providecommand{\BIBentryALTinterwordstretchfactor}{4}
\providecommand{\BIBentryALTinterwordspacing}{\spaceskip=\fontdimen2\font plus
\BIBentryALTinterwordstretchfactor\fontdimen3\font minus \fontdimen4\font\relax}
\providecommand{\BIBforeignlanguage}[2]{{%
\expandafter\ifx\csname l@#1\endcsname\relax
\typeout{** WARNING: IEEEtran.bst: No hyphenation pattern has been}%
\typeout{** loaded for the language `#1'. Using the pattern for}%
\typeout{** the default language instead.}%
\else
\language=\csname l@#1\endcsname
\fi
#2}}
\providecommand{\BIBdecl}{\relax}
\BIBdecl

\bibitem{10934779}
A.~Abdallah, A.~Hussain, A.~Celik, and A.~M. Eltawil, ``Exploring frontiers of polar-domain codebooks for near-field channel estimation and beam training: A comprehensive analysis, case studies, and implications for 6{G},'' \emph{IEEE Signal Process. Mag.}, vol.~42, no.~1, pp. 45--59, 2025.

\bibitem{10988573}
A.~Hussain, A.~Abdallah, and A.~M. Eltawil, ``Redefining polar boundaries for near-field channel estimation for ultra-massive {MIMO} antenna array,'' \emph{IEEE Trans. Wireless Commun.}, vol.~24, no.~10, pp. 8193--8207, 2025.

\bibitem{10243590}
Z.~Wu, M.~Cui, and L.~Dai, ``Enabling more users to benefit from near-field communications: From linear to circular array,'' \emph{IEEE Trans. Wireless Commun.}, vol.~23, no.~4, pp. 3735--3748, 2024.

\bibitem{10541333}
M.~Cui and L.~Dai, ``Near-field wideband beamforming for extremely large antenna arrays,'' \emph{IEEE Trans. Wireless Commun.}, vol.~23, no.~10, pp. 13\,110--13\,124, 2024.

\bibitem{11428208}
A.~Hussain, A.~Abdallah, A.~Celik, E.~Björnson, and A.~M. Eltawil, ``Analyzing {URA} geometry for enhanced near-field beamfocusing and spatial degrees of freedom,'' \emph{IEEE Trans. Commun.}, pp. 1--1, 2026.

\bibitem{1137900}
J.~Sherman, ``Properties of focused apertures in the fresnel region,'' \emph{IRE Transactions on Antennas and Propagation}, vol.~10, no.~4, pp. 399--408, 1962.

\bibitem{olver1954asymptotic}
F.~W. Olver, ``The asymptotic expansion of bessel functions of large order,'' \emph{Philos. Trans. R. Soc. Lond. A}, vol. 247, no. 930, pp. 328--368, 1954.

\end{thebibliography}
\appendices
\section*{Appendix A: Proof of Theorem \ref{theorem1}} 
\label{Appendix_A}
Utilizing \eqref{eqn_III_1} and \eqref{eqn_beamforming}, the array gain in the distance domain for \ac{UCA} is given by
$\mathcal{\GR} = \frac{1}{\NBS} \left| \sum_{n=1}^{\NBS} e^{j \frac{2\pi}{\lambda} \left( \frac{R^2}{2} r_\mathrm{eff} \left( 1 - \sin^2\theta \cos^2(\varphi - \psi_n) \right) \right)} \right|$,
where $r_\mathrm{eff} = \left| \frac{r - \rf}{r \rf} \right|$.
Defining $\zeta = \frac{\pi}{\lambda} \frac{R^2}{2} r_\mathrm{eff} \sin^2\theta$, substituting $R = \DUCA/2$ and $\RDUCA = \frac{2\DUCA^2}{\lambda}$ yields $\zeta = \frac{\pi \RDUCA}{16} r_\mathrm{eff} \sin^2\theta$.
Without loss of generality, let $\varphi = 0$ due to rotational symmetry of \ac{UCA}, so
$\mathcal{\GR} \approx \frac{1}{\NBS} \left| \sum_{n=1}^{\NBS} e^{-2j \zeta \cos^2(\psi_n)} \right|$.
Using trigonometric identity $\cos(2x) = 2\cos^2(x) - 1$, the gain simplifies to
$\mathcal{\GR} \approx \frac{1}{\NBS} \left| \sum_{n=1}^{\NBS} e^{-j \zeta \cos(2\psi_n)} \right|$.
For large $\NBS$, this sum approximates the integral
$\frac{1}{2\pi} \int_0^{2\pi} e^{-j \zeta \cos(\psi_n)} d\psi_n = J_0(\zeta)$,
where $J_0(\zeta)$ is the Bessel function of the first kind and order zero. The exact array gain for a \ac{UCA} can be expressed as a sum of Bessel functions of all integer orders. Using the Jacobi--Anger expansion \cite{10243590}, it can be shown that for $\zeta \ll \frac{\NBS}{2}$, the zeroth-order Bessel function dominates, while the contributions of the higher-order terms are negligible. Hence, $\mathcal{\GR} \approx |J_0(\zeta)|$, which completes the proof.
\section*{Appendix B: Proof of Corollary \ref{theorem2}} 
\label{Appendix_B}
To obtain the distance points where the gain function in \eqref{eqn_IIIA_1} equals $\unit[3]{dB}$ of its maximum value, we define $\alpha_{\mathrm{\scalebox{0.5}{3dB}}} \stackrel{\Delta}{=}\left\{ \zeta \mid \abs{\GR\left(\zeta \right)}^2 = 0.5 \right\}$. Thus, 
$\alpha_{\mathrm{\scalebox{0.5}{3dB}}} = \frac{\pi \RDUCA}{16} r_{\mathrm{eff}} \sin^2 \theta$, where $r_{\mathrm{eff}} = \left| \frac{r - \rf}{r \rf} \right|.$ Solving for $r$ yields two solutions: 
$r = \frac{ \pi \RDUCA \sin^2 \theta \rf }{ \pi \RDUCA \sin^2 \theta \pm 16 \alpha_{\mathrm{\scalebox{0.5}{3dB}}} \rf }
$. Therefore, $\rf^\mathrm{max} = \frac{ \pi \RDUCA \sin^2 \theta \rf }{ \pi \RDUCA \sin^2 \theta - 16 \alpha_{\mathrm{\scalebox{0.5}{3dB}}} \rf }$, and $\rf^\mathrm{min} = \frac{ \pi \RDUCA \sin^2 \theta \rf }{ \pi \RDUCA \sin^2 \theta + 16 \alpha_{\mathrm{\scalebox{0.5}{3dB}}} \rf }$. The distance window between $\rf^\mathrm{max}$ and $\rf^\mathrm{min}$ is the interval where $\GR$ is less than or equal to $\unit[3]{dB}$. Therefore, $\BD=\rf^\mathrm{max}-\rf^\mathrm{min}$ is given by (\ref{eqn_IIIA_2}), which completes the proof. 
\section*{Appendix C: Proof of Corollary \ref{theorem3}}\label{Appendix_C}
In \textit{Theorem} (\ref{theorem1}), the maximum value of $\BDUCA$ is obtained when the factor in the denominator ${(\pi \RDUCA \sin^2{\theta)^2} - ({16 \rf \alpha_{\mathrm{\scalebox{0.5}{3dB}}} })^2} = 0$. Thus, the farthest angle-dependent axial distance $\rf$, where we can have finite depth beamforming is less than $\frac{\pi\RDUCA}{16 \alpha_{\mathrm{\scalebox{0.5}{3dB}}}}\sin^2{(\theta)}$.
\normalsize
\end{document}